\begin{document}

\title{Data Type Inference for Logic Programming}
\author{João Barbosa \and Mário Florido \and Vítor Santos Costa}
\authorrunning{J. Barbosa, M. Florido, V. Santos Costa}
\institute{Faculdade de Ciências, Universidade do Porto, Portugal \\
\email{\{joao.barbosa,amflorid,vscosta\}@fc.up.pt}}

\maketitle

\begin{abstract}
In this paper we present a new static data type inference algorithm for logic programming. Without the need of declaring types for predicates, our algorithm is able to automatically assign types to predicates which, in most cases, correspond to the data types processed by their intended meaning. The algorithm is also able to infer types given data type definitions similar to data definitions in Haskell and, in this case, the inferred types are more informative in general.
We present the type inference algorithm, prove some properties and finally, we evaluate our approach on example programs that deal with different data structures.
\end{abstract}

\keywords{Logic Programming, Types, Type Inference}

\section{Introduction}
Types are program annotations that provide information about data
usage and program execution.  Ensuring that all types are
correct and consistent may be a daunting task for humans. However,
this task can be automatized with the use of a type inference algorithm which  assigns types to programs.

Logic programming implementers have been interested in types
from early on~\cite{DBLP:conf/iclp/Zobel87,DBLP:books/mit/pfenning92/DartZ92,DBLP:journals/corr/cs-LO-9810001,DBLP:conf/lics/FruhwirthSVY91,DBLP:books/mit/pfenning92/YardeniFS92,DBLP:journals/ai/MycroftO84,DBLP:conf/slp/LakshmanR91,DBLP:conf/lopstr/SchrijversBG08,DBLP:books/mit/pfenning92/HeintzeJ92,DBLP:conf/iclp/SchrijversCWD08}. Most research approached typing an over-approximation (a superset) of the program semantics \cite{DBLP:conf/iclp/Zobel87,DBLP:books/mit/pfenning92/DartZ92,DBLP:books/mit/pfenning92/YardeniFS92,DBLP:conf/iclp/BruynoogheJ88,DBLP:conf/lics/FruhwirthSVY91}: any programs that succeeds
will necessarily be well-typed. Other researchers followed the
experience of functional languages and took a more aggressive approach
to typing, where only well-typed programs are acceptable. 
Over the course of the last years it has become clear that there is a
need for a type inference system that can support Prolog well. Next, we report on recent progress on our design, the
$YAP^T$ type system\footnote{This work is partially funded by the portuguese Fundação para a Ciência e a Tecnologia and by LIACC (FCT/UID/CEC/0027/2020).}. We will introduce the
key ideas and then focus on the practical aspects.

 Our approach is motivated by the
belief that programs (Prolog or otherwise) are about manipulating data
structures. In Prolog, data structures are denoted by terms with a common structure, and, being untyped 
one cannot naturally distinguish between failure and results of type erroneous calls. We believe that to fully use data structures we must be able to discriminate between failure, error, and success~\cite{BarbosaFloridoCosta19}.

Our starting point was a three-valued semantics that clearly distinguishes error and falsehood. In \cite{BarbosaFloridoCosta19} we use it do define a type system for logic programming that is
semantically sound. In \cite{BarbosaFloridoCosta19} we presented a type system which formally defines the meaning of a {\em well-typed} program and showed that this notion of {\em well-typing} is sound with respect to a three-valued program semantics. From now on when we say {\em well-typed} we mean the notion presented in \cite{BarbosaFloridoCosta19}. Here we present the $YAP^T$ type inference algorithm which is able to automatically infer data type definitions.

To use our semantics, we shall assume that typed Prolog programs
operate in a context, e.g., suppose a programming context where the well-known
\texttt{append} predicate is expected to operate on lists:
\begin{verbatim}
append([],X,X).
append([X|R],Y,[X|R1]) :- append(R,Y,R1).
\end{verbatim}
This information is not achievable when using type inference as a
conservative approximation of the success set of the predicate.
The following example shows the output of type inference in this case, where \texttt{ti} is the type of the {\em i-th} argument of \texttt{append}, ``+" means type disjunction and ``A" and ``B" are type variables:
\begin{figure}
\centering
\begin{minipage}{.5\textwidth}
  \begin{verbatim}
t1 = [] + [A | t1]
t2 = B
t3 = B + [A | t3]
  \end{verbatim}
\end{minipage}%
\begin{minipage}{.5\textwidth}
  \begin{verbatim}
t1 = [] + [A | t1]
t2 = [] + [A | t2]
t3 = [] + [A | t3]
  \end{verbatim}
\end{minipage}
\caption{(1) Program approximation ; (2) Well-typing}
\end{figure}

Types {\ttfamily t2} and {\ttfamily t3}, for the second and third argument of the left-hand side (1), do not filter any possible term, since they have a type variable as a member of the type definition, which can be instantiated with any type. And, in fact, assuming the specific context of using \texttt{append} as list concatenation, some calls to \texttt{append} succeed even if {\em unintended}\footnote{accordingly to a notion of {\em intended meaning} first presented in \cite{Naish92}}, such as \texttt{append([],1,1)}. The solution we found for these arguably over-general types is the definition of closed types, that we first presented in \cite{BarbosaFloridoCosta17}, which are types where every occurrence of a type variable is constrained. We also defined a closure operation, from open types into closed types, using only information provided by the set of types themselves.
Applying our type inference algorithm with closure to the \texttt{append} predicate yields the types on the right-hand side (2), which are the {\em intended types} \cite{Naish92} for the \texttt{append} predicate.

Our type inference algorithm\footnote{Implementation at \url{https://github.com/JoaoLBarbosa/TypeInferenceAlgorithm}} works for pure Prolog with system predicates for arithmetic. We assume as base types $int$, $float$, $string$, and $atom$. There is an optional definition of type declarations (like data declarations in Haskell), which, if declared by the programmer, are used by the type inference algorithm to refine types. We follow a syntax inspired in \cite{DBLP:conf/iclp/SchrijversCWD08} to specify type information. One example of such a declaration is the list datatype \begin{center}\texttt{:- type list(A)~=~[]~+~[A~|~list(A)]}.\end{center}

In order to simplify further processing our type inference algorithm first compiles every predicate to a simplified form called {\em kernel Prolog} \cite{VanRoy:CSD-90-600}. In this representation, each predicate is defined by a single clause ($H :- B$), where the head $H$ contains distinct variables as arguments and the body $B$ is a disjunction (represented by the symbol $;$) of queries. We assume that there are no common variables between queries, except for the variables that occur in the head of the clause, without loss of generality.  In this form the scope of variables is not limited to a single clause, but is extended over the whole predicate and thus type inference is easier to perform. In the rest of the paper we will assume that predicate definitions are always in kernel Prolog.

As simple illustrating examples, using a pre-defined type declaration for lists, and reading "::" as "has type", our type inference algorithm gives the following results:
\begin{verbatim}
PREDICATE:                             TYPE:

l(X) :- (X = [] ; X = [Y|Ys], l(Ys)).   l :: t1
                                        t1 = list(C)
                                        list(B) = [] + [B|list(B)]

p(X) :- X = a ; X = 3.                  p :: t2
                                        t2 = atom + int

q(Y) :- Y =  1.23; Y = 5.               q :: t3
                                        t3 = float + int
                                        
h(Z) :- Z = W, Z = Q, p(W), q(Q).       h :: t4
                                        t4 = int
\end{verbatim}

\section{Types}\label{regtype}

Here we define a new class of expressions, which we shall call {\em types}. We first define the notion of {\em type term} built from an infinite and enumerable set of type variables  $TVar$, a finite set of base types $TBase$, an infinite and enumerable set of constants $TCons$, an infinite and enumerable set of function symbols $TFunc$, and an infinite and enumerable set of type symbols, $TSymb$. 
{\em Type terms} can be:
\begin{itemize}
\item a type variable ($\alpha,\beta,\gamma,\dots \in TVar$)
\item a type constant ($1, [~], `c\text{'},\dots \in TCons$)
\item a base type ($int, float, \dots \in TBase$)
\item a function symbol $f \in TFunc$ associated with an arity $n$ applied to an n-tuple of type terms ($f(int,[~], g(X))$).
\item a type symbol $\sigma \in TSymb$ associated with an arity $n$ ($n \ge 0)$ applied to an n-tuple of type variables ($\sigma(X,Y)$).
\end{itemize}

Type variables, constants and base types are called {\em basic types}. A {\em ground type term} is a variable-free type term. 
A {\em type definition}, which introduces a new type called an {\em algebraic data type}, is of the form:

$$\sigma(X_1,\ldots,X_k) = \tau_1 + \dots + \tau_n,$$
where each $\tau_i$ is a type term and $\sigma$ is the type symbol being defined. Variables $X_1,\ldots,X_n$, for $n \geq 0$, include the type variables occurring in $\tau_1 + \dots + \tau_n$, and are called {\em type parameters}. The sum $\tau_1 + \dots + \tau_n$ is a {\em union type}, describing values that may have one of the types $\tau_1,\ldots,\tau_n$. Note that type definitions may be recursive. {\em Deterministic type definitions} are type definitions where, on the right-hand side, none of $\tau_i$ is a type symbol and if $\tau_i$ is a type term starting with a type function symbol $f$, then no other $\tau_j$ starts with $f$.

\begin{example}
Assuming a base type $int$ for the set of all integers, the type list of integers is defined by the type definition 
$list = [~] + [int~|~list]$\footnote{Type definitions will use the user friendly Prolog notation for lists instead of the list constructor}.
\end{example}
A {\em predicate type} is a functional type from a tuple of the type terms defining the types of its arguments to bool: $\tau_1 \times \ldots \times \tau_n \to bool$. A {\em type} can be a {\em type term}, an {\em algebraic data type} or a {\em predicate type}. In \cite{BarbosaFloridoCosta19} we defined a formal Hindley-Milner  semantics for types.

Our type language enables parametric polymorphism through the use of type schemes. A {\em type scheme} is defined as $\forall_{X_1}\ldots\forall_{X_n} T$, where $T$ is a predicate type and $X_1,\ldots,X_n$ are distinct type variables.
In logic programming, there have been several authors that have dealt with polymorphism with type schemes or in a similar way \cite{DBLP:conf/slp/PyoR89,DBLP:journals/scp/BarbutiG92,Henglein:1993:TIP:169701.169692,DBLP:conf/iclp/Zobel87,DBLP:conf/lics/FruhwirthSVY91,DBLP:conf/iclp/GallagherW94,DBLP:books/mit/pfenning92/YardeniFS92,TAT}. Type schemes have type variables as generic place-holders for ground type terms. Parametric polymorphism comes form the fact these type variables can be instantiated with any type. 

\begin{example}
A polymorphic list is defined by the following type definition:

$$list(X) = [~] + [X~|~list(X)]$$
\end{example}

\noindent {\bf Notation} Throughout the rest of the paper, for the sake of readability, we will omit the universal quantifiers on type schemes and the type parameters as explicit arguments of type symbols in inferred types. Thus we will assume that all free type variables on type definitions of inferred types are type parameters which are universally quantified.


Most  type  languages  in  logic  programming  use  tuple  distributive  closures of types. The notion of tuple distributivity was given by Mishra \cite{Mishra84}. Throughout this paper, we restrict our type language to tuple distributive types, where type definitions are deterministic.  

Sometimes, the programmer wants to introduce a new type in a program, so that it is recognized when performing type inference. It is also a way of having a more structured and clear program. These declarations act similarly to data declarations in Haskell.

In our algorithm, types can be declared by the programmer in the following way $:- type ~ type\_symbol(type\_vars) = type\_term_1 + \dots + type\_term_n$. One example would be:
\begin{verbatim}
:- type tree(X) = empty + node(X, tree(X), tree(X)).
\end{verbatim}

In the rest of the paper we will assume that all constants and function symbols that start a summand in a declared type cannot start a summand in a different one, thus there are no overloaded constants nor function symbols. Note that there is a similar restriction on data declarations in functional programming languages.

\subsection{Closed Types}

{\em Closed types} were first defined in \cite{BarbosaFloridoCosta17}. Informally, they are types where every occurrence of a type variable is constrained. If a type is not closed, we say that it is an {\em open type}.
The restrictions under the definition of closed type can be compressed in the following three principles:
\begin{itemize}
    \item Types should denote a set of terms which is strictly smaller than the set of all terms
    \item Every use of a variable in a program should be type constrained
    \item Types are based on self-contained definitions
\end{itemize}
The last one is important to create a way to go from open types to closed types.
We defined what is an unconstrained type variable as follows:
\begin{definition}[Unconstrained Type Variable]
A type variable $\alpha$ is {\em unconstrained} with respect to a set of type definitions $T$, notation $unconstrained(\alpha,T)$, if and only if it occurs exactly once in the set of all the right-hand sides of type definitions in $T$.
\end{definition}
Unconstrained type variables type terms with any type, thus they do not really provide type information. 
We now define {\em closed algebraic data types}, which are types 
without variables as summands in their definition.

\begin{definition}[Closed Algebraic Data Type]
An algebraic data type $\tau$ is {\em closed}, notation \emph{closedDataType}$(\tau)$, if and only if it has no type variables as summands in its type definition.
\end{definition}

The definition for closed types uses these two previous auxiliary definitions. Closed types correspond to close records or data definitions in functional programming languages. The definition follows:

\begin{definition}[Closed Types]
A type term $\tau$ is {\em closed} with respect to a set of type definitions $T$, notation \emph{closed}$(\tau,T)$, if and only if the predicate defined as follows holds:
\begin{equation*}
    closed(\tau,T) =
    \begin{dcases*}
      closedDataType(\tau)  & if $\tau$ is an algebraic data type \\
      \neg unconstrained(\tau,T) & if $\tau$ is a type variable   \\
      True & if $\tau$ is basic but not a type variable  \\
    \end{dcases*}
 \end{equation*}

\end{definition}

\begin{example}
We recall the example in the Introduction, of the following types for the \texttt{append} predicate, where $t_n$ is the type of $nth$ predicate argument:
\begin{verbatim}
t1 = [] + [A | t1]
t2 = B
t3 = B + [A | t3]
\end{verbatim}
Type \texttt{t3}, for the third argument of \texttt{append}, is open, because \texttt{t3} has a type variable as a summand, thus it does not filter any possible term, since the type variable can be instantiated with any type. 
An example of a valid closed type for \texttt{append} is:
\begin{verbatim}
t1 = [] + [A | t1]
t2 = [] + [A | t2]
t3 = [] + [A | t3]
\end{verbatim}

\end{example}

The next step is to transform open types into closed types. Note that some inferred types may be already closed. For the ones that are not, we defined a {\em closure} operation, described in detail in \cite{BarbosaFloridoCosta17}. This closure operation is present in our type inference algorithm as an optional step to perform on the resulting types after solving the type constraints generated by the algorithm.

\section{Examples} \label{examples}

There are some flags in the type inference algorithm that can be turned on or off:
\begin{itemize}
    \item {\em basetype} (default: {\em on}) - this flag types constants with their base types when turned on and with a constant symbol equal to the constant itself otherwise;
    \item {\em list} (default: {\em off}) - this flag adds the data type declaration for polymorphic lists to the program when turned on;
    \item {\em closure} (default: {\em off}) - this flag applies closure to the resulting type definitions after solving the constraints when turned on.
\end{itemize}
 In the following examples \texttt{pi} is the type symbol for the type of the $ith$ argument of predicate \texttt{p} and we assume that all free type variables on type definitions are universally quantified and that the type of arguments of built-in arithmetic predicates is predefined as $int + float$.

\begin{example}
Let us consider the predicate $concat$, which flattens a list of lists, where {\em app} is the {\em append} predicate:
\begin{verbatim}
concat(X1,X2) :- X1=[], X2=[];
        X1=[X|Xs], X2=List, concat(Xs,NXs), app(X,NXs,List).

app(A,B,C) :- A=[], B=D, C=D;
        app(E,F,G), E=H, F=I, G=J, A=[K|H], B=I, C=[K|J].
\end{verbatim}
The types inferred with all the flags {\em off} correspond to types inferred in previous type inference algorithms which view types as an approximation of the success set of the program:
\begin{verbatim}
concat :: concat1 x concat2
concat1 = [] + [ t | concat1 ]
concat2 = C + [] + [ B | concat2 ]
t = [] + [ B | t ]

app :: app1 x app2 x app3
app1 = [] + [ A | app1 ]
app2 = B
app3 = B + [ A | app3 ]
\end{verbatim}
Now the types inferred when turning on the closure flag, \emph{closure\_flag}, are:
\begin{verbatim}
concat :: concat1 x concat2
concat1 = [] + [ concat2 | concat1 ]
concat2 = [] + [ B | concat2 ]

app :: app1 x app2 x app3
app1 = [] + [ A | app1 ]
app2 = [] + [ A | app2 ]
app3 = [] + [ A | app3 ]
\end{verbatim}
Note that these types are not inferred by any previous type inference algorithm for logic programming so far, and they are a step forward to the automatic inference of program use in a specific context, more precisely, a context which corresponds to how it would be used in a programming language with data type declarations, such as Curry \cite{DBLP:conf/birthday/Hanus13} or Haskell.
\end{example}

\begin{example}
Let $rev$ be the reverse list predicate, defined using the $append$ definition used in the previous example:
\begin{verbatim}
rev(A, B) :- A=[], B=[] ;
        rev(C, D), app(D, E, F), E=[G], A=[G|C], B=F.
\end{verbatim}
The inferred types with all flags off is (the types inferred for append are the same as the one in the previous example):
\begin{verbatim}
rev :: rev1 x rev2
rev1 = [] + [ A | rev1 ]
rev2 = [] + [ t | rev2 ]
t = B + A
\end{verbatim}
If we turn on the \emph{list\_flag}, which declares the data type for Prolog lists, the type inference algorithm outputs the same types that would be inferred in Curry or Haskell with pre-defined built-in lists:
\begin{verbatim}
rev :: rev1 x rev2
rev1 = list(A)
rev2 = list(A)

list(X) = [] + [ X | list(X) ]
\end{verbatim}
\end{example}
We now show an example of the minimum of a tree.
\begin{example}
Let $tree\_minimum$ be the predicate defined as follows:
\begin{verbatim}
tree_min(A,B) :- A=empty, B=0 ;
        A=node(C,D,E), tree_min(D,F), tree_min(E,G),
        Y=[C,F,G], minimum(Y,X), X=B.

minimum(A,B) :- A=[I], B=I;
        A=[X|Xs], minimum(Xs,C), X=<C, B=C ;
        A=[Y|Ys], minimum(Ys,D), D=<Y, B=D.
\end{verbatim}
The inferred types with all flags off, except for the $basetype\_flag$, are:
\begin{verbatim}
tree_min :: tree_min1 x tree_min2
tree_min1 = atom + node(tree_min2, tree_min1, tree_min1)
tree_min2 = A + int + float

minimum :: minimum1 x minimum2
minimum1 = [ minimum2 | t ]
minimum2 = A + int + float
t2 = [] + [ minimum2 | t2 ]
\end{verbatim}
If we now add a predefined declaration of a tree data type and turn on the $list\_flag$, the algorithm outputs:
\begin{verbatim}
tree_minimum :: tree_minimum1 x tree_minimum2
tree_minimum1 = tree(tree_minimum2)
tree_minimum2 = int + float

minimum :: minimum1 x minimum2
minimum1 = list(minimum2)
minimum2 = int + float

tree(X) = empty + node(X, tree(X), tree(X))
list(Y) = [] + [ Y | list(Y) ]
\end{verbatim}
\end{example}

\section{Type Inference} \label{algorithm}

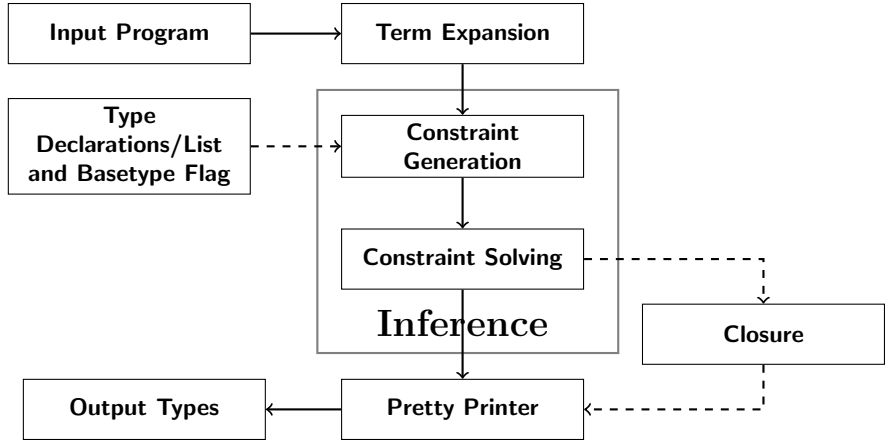
\begin{figure}[ht] 
\centering
\begin{tikzpicture}
 [node distance = 1cm, auto,font=\footnotesize,
every node/.style={node distance=1.5cm},
node/.style = {rectangle,draw,text width=3cm,text badly centered, minimum height=0.8cm,font=\bfseries\footnotesize\sffamily}]
 
 \node [node] (input) {Input Program};
 \node [node, right= 1.2cm of input] (expand) {Term Expansion};
 \node [node, below of = expand] (congen) {Constraint Generation};
 \node [node, left= 1.2cm of congen] (flags) {Type Declarations/List and Basetype Flag};
 \node [node, below of = congen] (solver) {Constraint Solving};
 \draw (solver)+(4,-1) node[rectangle,draw,text width=3cm,text badly centered, minimum height=0.8cm,font=\bfseries\footnotesize\sffamily] (closure) {Closure};
 \node [node, node distance = 2cm, below of=solver] (pp) {Pretty Printer};
 \node [node, left= 1cm of pp] (output) {Output Types};
 \path (pp)+(0,1.15) node[font=\bfseries\textbf{}] (algorithm) {\Large{Inference}};
 
 \draw[color=gray,thick](2.5,-4.25) rectangle (6.5,-.75);
 
\draw[->, thick, dashed]
(solver.east) -| (closure.north);
 
\draw[->, thick, dashed]
(closure.south) |- (pp.east);

\draw[->, thick, dashed]
(flags) to (congen);

\path[->,thick]
(input) edge (expand)
(expand) edge (congen)
(congen) edge (solver)
(solver) edge (pp)
(pp) edge (output);
\end{tikzpicture}
\caption{Type Inference Algorithm Flowchart}
\label{imagem}
\end{figure}

The type inference algorithm is composed of several modules, as described in Figure \ref{imagem}. On a first step, when consulting programs, we apply term expansion to transform programs into the internal format that the rest of the algorithm expects. Secondly, we have the type inference phase itself, where a type constraint solver outputs the inferred types for a given program. There is also a simplification phase performed during inference. After this, we run directly a type pretty printer, or go through closure before printing the types.

Thus the type inference algorithm is composed of four main parts with some auxiliary steps:
\begin{itemize}
    \item Term expansion
    \item Constraint generation
    \item Constraint solving
    \item Closure (optional)
\end{itemize}

There is also a simplification step, that is applied at every step.
Without closure or type declarations our algorithm follows a standard approach of types as approximations of the program semantics. Using our algorithm to infer well-typings (which filter program behaviour instead of approximating it) is possible either by using explicit type declarations or by using the closure step.

\subsection{Stratification}

We assume that the input program of our algorithm is {\em stratified}. To understand the meaning of stratified programs, let us define the {\em dependency directed graph} of a program as the graph that has one node representing each predicate in the program and an edge from $q$ to $p$ for each call from a predicate $p$ to a predicate $q$.

\begin{definition}[Stratified Program]
A {\em stratified program} $P$ is such that the dependency directed graph of $P$ has no cycles of size more than one.
\end{definition}

This means that our type inference algorithm deals with predicates defined by direct recursion but not with mutual recursion. Note that stratified programs are widely used and characterize a large class of programs which is used in several database and knowledge base systems \cite{Ullman88}. Despite this, expanding our algorithm to mutually recursive predicates is a step we are working towards.

\subsection{Constraints and Constraint Generation}

The type inference algorithm begins by generating type constraints from a logic program, that will be solved by a constraint solver in a second stage of the algorithm.
There are two different kinds of type constraints: equality constraints and subtyping constraints. The constraint generation step of the algorithm will output two sets of constraints, $Eq$ (a set of equality constraints) and $Ineq$ (a set of subtyping constraints), that need to be solved during type inference.
An equality constraint is of the form $\tau_1 = \tau_2$ and a subtyping constraint is of the form $\tau_1 \leq \tau_2$. In the subtyping constraint case, $\tau_i$ can be a type term or a sum of several type terms.

An {\em assumption} is a type declaration for a variable, written $X:\tau$, where $X$ is a variable and $\tau$ a type. We define a {\em context} $\Gamma$ as a set of assumptions with distinct variables as subjects (alternatively {\em contexts} can be defined as functions from variables to types and where $domain(\Gamma)$ stands for its domain).
Contexts are synthetized by the type inference algorithm in two different ways: we can build a new context by linking assumptions in different contexts with disjunction (corresponding to different clauses of the same predicate) or with conjunction (corresponding to different queries). For this we define two auxiliary functions, $\otimes$ (for conjunction) and $\oplus$ (for disjunction), which are used by the constraint generation algorithm. They are defined as follows:

\begin{definition}
Let $\Gamma_1$ and $\Gamma_2$ be two contexts and $Def$ be a set of type definitions defining the type symbols in $\Gamma_1 \cup \Gamma_2$. Let $V = domain(\Gamma_1) \cap domain(\Gamma_2)$.\\
$\oplus \big{(}(\Gamma_1 \cup \Gamma_2), Def\big{)} = (\Gamma,Def\prime)$, such that $Def \subseteq Def\prime$, \\
where $\forall X \in V.\Gamma(X) = \sigma$ and $(\sigma = \Gamma_1(X) + \Gamma_2(X)) \in Def\prime$, all $\sigma$s fresh, and \\
$\forall X. X \in domain(\Gamma_i) , X \notin V. \Gamma(X) = \Gamma_i(X)$, for $i,j = 1,2$.
\end{definition}

\begin{definition}
Let $\Gamma_1$ and $\Gamma_2$ be two contexts and $Def$ be a set of type definitions defining the type symbols in $\Gamma_1 \cup \Gamma_2$. Let $V = domain(\Gamma_1) \cap domain(\Gamma_2)$.\\
$\otimes \big{(} (\Gamma_1 \cup \Gamma_2), Def\big{)} = (\Gamma,Def\prime,Eq)$\\
where $\forall X \in V.\Gamma(X) = \sigma$, $(\sigma = \alpha) \in Def\prime$, and $\{\alpha = def(\Gamma_1(X)), \alpha = def(\Gamma_2(X))\} \in Eq$, all $\alpha$s and $\sigma$s fresh, and \\
$\forall X. X \in domain(\Gamma_i) , X \notin V. \Gamma(X) = \Gamma_i(X)$, for $i,j = 1,2$.
\end{definition}

Let $P$ be a term, an atom, a query, a sequence of queries or a clause. $generate(P)$ is a function that outputs a tuple of the form $(\tau,\Gamma,Eq,Ineq,Def)$, where $\tau$ is a type, $\Gamma$ is an context for logic variables, $Eq$ is a set of equality constraints, $Ineq$ is a set of subset constraints, and $Def$ is a set of type defintions. The function $generate$, which generates the initial type constraints, is defined by cases from the program syntax. Its definition follows:
\newline
\newline
$generate(P) =$
\begin{itemize}
    \item $generate(X) = (\alpha,\{X:\sigma\},\emptyset,\emptyset,\{\sigma = \alpha\})$, $X$ is a variable,\\
    where $\alpha$ is a fresh type variable and $\sigma$ is a fresh type symbol.\\
    
    \item $generate(c) = (basetype(c),\emptyset,\emptyset,\emptyset,\emptyset)$, $c$ is a constant.\\
    
    \item $generate(f(t_1,\dots,t_n)) = (basetype(f)(\tau_1,\dots,\tau_n), \Gamma, Eq, \emptyset, Def)$,\\
    where $generate(t_i) = (\tau_i,\Gamma_i,Eq_i,\emptyset,Def_i)$,\\
    $(\Gamma,Eq\prime,Def) = \otimes \big{(}(\Gamma_1 \cup \dots \cup \Gamma_n), (Def_1 \cup \dots Def_n)\big{)}$, and\\
    $Eq = Eq_1 \cup \dots \cup Eq_n \cup Eq\prime$.\\
    
    \item $generate(t_1 = t_2) = (bool, \Gamma, Eq, \emptyset,Def)$\\
    where $generate(ti) = (\tau_i,\Gamma_i, Eq_i, \emptyset,Def_i)$\\
    $(\Gamma,Def,Eq\prime) = \otimes \big{(}(\Gamma_1 \cup \Gamma_2), (Def_1 \cup Def_2)\big{)}$, and\\
    $Eq = Eq_1 \cup Eq_2 \cup \{\tau_1 = \tau_2\} \cup Eq\prime$.\\
    
    \item $generate(p(t_1,\dots,t_n)) = (bool, (\{X_1:\sigma_1, \dots, X_n : \sigma_n\}, Eq, \{\alpha_1 \leq \tau_1, \dots, \alpha_n \leq \tau_n\},Def\prime)$, a non-recursive call, \\ where $generate(p(Y_1,\dots,Y_n) :- body) = (bool, \Gamma, Eq, Ineq,Def)$,\\ $\{Y_1 : \tau_1, \dots Y_n : \tau_n\} \in \Gamma$\\
    $Def\prime = Def \cup \{\sigma_i = \alpha_i\}$, and $\sigma_i$ and $\alpha_i$ are all fresh.\\
    
    \item $generate(p(t_1,\dots,t_n)) = (bool, \{X_1 : \alpha_1,\dots,X_n:\alpha_n\}, \emptyset,\{\alpha_1 \leq \tau_1, \dots, \alpha_n \leq \tau_n \}, \emptyset)$, a recursive call,\\
    where $\tau_1, \dots, \tau_n$ are the types for the variables in the head of the clause this call occurs in, and $\alpha_i$ are all fresh type variables.\\
    
    \item $generate(c_1,\dots,c_n) = (bool, \Gamma, Eq , Ineq_1 \cup \dots \cup Ineq_n,Def)$,\\
    where $generate(c_i) = (bool,\Gamma_i,Eq_i,Ineq_i,Def_i)$,\\
    $(\Gamma,Def,Eq\prime) = \otimes \big{(} (\Gamma_1 \cup \dots \cup \Gamma_n), (Def_1 \cup \dots \cup Def_n)\big{)}$, and\\
    $Eq = Eq_1 \cup \dots \cup Eq_n \cup Eq\prime$.\\
    
    \item $generate(b_1;\dots;b_n) = (bool,\Gamma, Eq_1 \cup \dots \cup Eq_n, Ineq_1 \cup \dots \cup Ineq_n,Def)$,\\
    where $generate(c_i) = (bool,\Gamma_i,Eq_i,Ineq_i,Def_i)$, and\\
    $(\Gamma,Def) = \oplus \big{(} (\Gamma_1 \cup \dots \cup \Gamma_n), (Def_1 \cup \dots \cup Def_n)\big{)}$.\\
    
    \item $generate(p(X_1,\dots,x_n) :- body.) = (bool, \Gamma, Eq, Ineq, Def)$,\\
    where $generate(body) = (bool, \Gamma, Eq, Ineq, Def)$.
\end{itemize}

\begin{example} \label{mainexample}
Consider the following predicate:
\begin{verbatim}
list(X) :- X = []; X = [Y|YS], list(Ys).
\end{verbatim}
the output of applying the generate function to the predicate is:\\
$generate(list(X) :- X = [~]; X = [Y|YS], list(Ys)) = \\ 
\{bool, \{X:\sigma_1, Y:\sigma_2, Ys:\sigma_3\}, \{\alpha = [~], \beta = [\delta~|~\epsilon], \epsilon  = \upsilon\}, \{\upsilon \leq \sigma_1\}, \{\sigma_1 = \alpha + \beta, \sigma_2 = \delta, \sigma_3 = \epsilon, \sigma_4 = \upsilon \}\}$\\
The set $\{\upsilon \leq \sigma_1\}$ comes from the recursive call to the predicate, while $\alpha = [~]$ comes from $X = [~]$, and $\beta = [\delta~|~\epsilon]$ comes from $X = [Y|YS]$. The other constraints come from the application of the $\otimes$ operation.
\end{example}

\subsection{Constraint Solving}

 Let $Eq$ be a set of equality constraints, $Ineq$ be a set of subtyping constraints, and $Def$ a set of type definitions. Function $solve(Eq,Ineq,Def)$ solves the constraints, outputing a new set of type definitions. Note that the rewriting rules in the following definitions of the solver algorithm are assumed to be ordered. \newline

\noindent
$solve(Eq,Ineq,Def) = FinalDef$, where
\begin{itemize}
    \item $solve\_eq(Eq) = S$
    \item $solve\_ineq(S(Ineq),S(Def)) = FinalDef$
\end{itemize}

\begin{definition}
A set of equality constraints is in {\em solved form} if all constraints are of the form $\alpha = \tau$, where $\alpha$ is a type variable and $\tau$ is a type term.
\end{definition}

\subsubsection{Solving Equality Constraints}

Let $Eq$ be a set of equality constraints. Then $solve\_eq$ is a rewriting algorithm that outputs a set of equality constraints in solved form, which can be interpreted as a substitution, that will later be applied to $Ineq$ and $Def$.

We now present the rewrite rules for solving the type equality constraints. A {\em configuration} is either the term {\em fail} (representing failure) or a set of equality constraints {\em Eq}. The rewriting algorithm consists of the transformation rules on configurations listed below. 
\\
\\
\noindent
$solve\_eq(Eq) =$
\begin{enumerate}
    \item $\{t = t\} \cup Rest \to Rest$
    
    \item $\{\alpha = t\} \cup Rest \to \{\alpha = t\} \cup Rest[\alpha \mapsto t]$, if $\alpha$, which is a type variable, occurs somewhere in $Rest$
    
    \item $\{t = \alpha\} \cup Rest \to \{\alpha = t\} \cup Rest$, where $\alpha$ is a type var and $t$ is not a type variable nor a type symbol
    
    \item $\{f(t_1,\dots,t_n) = f(s_1,\dots,s_n)\} \cup Rest \to \{t_1 = s_1, \dots, t_n = s_n\} \cup Rest$
    
    \item otherwise $\to$ fail
\end{enumerate}

Note that an occur check would be required in step 2, but from the definition of our constraint generation algorithm, it never happens that $\alpha$ occurs in $t$.

\begin{example}
Following example \ref{mainexample}, applying $solve\_eq$ to the set of equality constraints is as follows step-by-step:
$$\{\alpha = [~], \beta = [\delta~|~\epsilon], \epsilon = \upsilon\} \rightarrow_2 \{\alpha = [~], \beta = [\delta~|~\upsilon]\}$$
And the final set only contains constraints in solved form, that can be seen as a substitution to be applied to $Ineq$ and $Def$.
\end{example}

\subsubsection{Solving Subtyping Constraints}

Let $Ineq$ be a set of inequality constraints and $Def$ be a set of type definitions. $solve\_ineq$ is a rewriting algorithm that outputs a new set of type definitions.

A {\em configuration} now is either the term {\em fail} (representing failure) or a pair of the form $(Ineq,Def)$, where $Ineq$ and $Def$ are as described above. The rewriting algorithm consists on the transformation rules on configurations listed below. In this algorithm we assume a store of pairs of types that already have been compared by the algorithm.
\newline
\newline
\noindent
$solve\_ineq(Ineq,Def) =$
\begin{enumerate}
    \item $(\{t \leq t\} \cup Rest,Def) \to (Rest,Def)$
    
    \item $(\{f(t_1,\dots,t_n) \leq f(s_1,\dots,s_n)\} \cup Rest, Def) \to (\{t_1 \leq s_1, \dots t_n \leq s_n\} \cup Rest,Def)$
    
    \item $(\{\alpha \leq t_1, \dots, \alpha \leq t_n\} \cup Rest, Def) \to (\{\alpha \leq t\} \cup Rest, S(Def))$,\\
    where $\alpha$ is a type variable, $n \geq 2$, and $intersect(t_1,\dots,t_n) = (t,S)$
    
    \item $(\{\alpha \leq t\} \cup Rest,Def) \to (Rest,Def [\alpha = t])$,\\
    where $\alpha$ is a type variable
    
    \item $(\{t_1 + \dots + t_n \leq t\} \cup Rest, Def) \to (\{t_1 \leq t, \dots, t_n \leq t\} \cup Rest, Def)$
    
    \item $(\{\sigma \leq t\} \cup Rest,Def) \to (Rest,Def)$,\\
    if $\sigma$ and $t$ have already been compared
    
    \item $(\{\sigma \leq t\} \cup Rest,Def) \to (\{Rhs_{\sigma} \leq t\} \cup Rest, Def)$,\\
    where $\sigma$ is a type symbol, and $\sigma = Rhs_{\sigma} \in Def$. Also add $(\sigma,t)$ to the store of pairs of types that have been compared.
    
    \item $(\{t_1 \leq \alpha, \dots t_n \leq \alpha\} \cup Rest,Def) \to (Rest,Def [\alpha = t_1 + \dots + t_n])$
    
    \item $(\{t \leq t_1 + \dots + t_n\}\cup Rest,Def) \to (\{t \leq t_i\}\cup Rest,Def)$,\\
    where $t_i$ is one of the summands
    
    \item $(\{t \leq \sigma\}\cup Rest,Def) \to (Rest,Def)$,\\
    if $\sigma$ and $t$ have already been compared.
    
    \item $(\{t \leq \sigma\}\cup Rest,Def) \to (\{t \leq Rhs_{\sigma}\}\cup Rest,Def)$,\\
    where $\sigma$ is a type symbol, and $\sigma = Rhs_{\sigma} \in Def$. Also add $(\sigma,t)$ to the store of pairs of types that have been compared.
    
    \item $(\emptyset,Def) \to Def$
    
    \item otherwise $\to$ fail
\end{enumerate}

Type intersection,  $intersect(t_1,\dots,t_n)$, is calculated using a version of Zobel's intersection algorithm previously presented in \cite{DBLP:conf/iclp/Zobel87}.

\begin{example}
Following example \ref{mainexample}, applying $solve\_ineq$ to the set of subtyping constraints is as follows step-by-step:\\
$$(\{\upsilon \leq \sigma_1\}, \{\sigma_1 = [~] + [\delta~|~\upsilon]\}) \rightarrow_4 (\emptyset, \{\sigma_1 = [~] + [\delta~|~\sigma_1]\})$$
\end{example}
{\bf Arithmetic} To extend the type inference algorithm to include built-in arithmetic predicates ($\leq/2$, $</2$, $\geq/2$, $>/2$, and $is/2$) we add the following rule to the constraint generation algorithm:
\begin{itemize}
    \item $generate(p(t_1,t_2)) = (bool,\{X_1 : \sigma_1, \dots , X_n:\sigma_n\},\emptyset,\{\alpha_1 \leq num, \dots , \alpha_n \leq num\}, \{\sigma_1 = \alpha_1, \dots, \sigma_n = \alpha_n\})$,\\
    where $X_1, \dots, X_n$ are the variables occurring in $t_1$ and $t_2$; $\sigma_i$ and $\alpha_i$ are fresh; and $num = int + float$.
\end{itemize}

\subsection{Decidability and Soundness}

The next theorem shows that both the equality constraint and subtyping constraint solvers terminate at every input set of constraints.
\begin{theorem}[Termination]
$solve\_eq$ and $solve\_ineq$ always terminate. When $solve\_eq$ terminates, it either fails or the output is in solved form. When $solve\_ineq$ terminates, it either fails or the output is a pair with zero subtyping constraints.
\end{theorem}
Proofs of this theorem follows a usual termination proof approach, where we show that a carefully chosen metric decreases at every step.

We do not have a formal proof of soundness yet, but we believe that proofs of soundness of type inference algorithms for functional programming can be adapted to the logic programming case; in particular, we need the following extra definitions and lemmas:

\begin{itemize}
    \item a type system which defines the notion of {\em well-typed} program: for this we use the type system defined in \cite{BarbosaFloridoCosta19};
    \item a notion of constraint satisfaction;
    \item a lemma for the soundness of constraint solving which states that the solver terminates in a set of constraints in solved form which satisfies the initial set of constraints.
\end{itemize}

Informally, the soundness theorem should state that 
if one applies the substitution corresponding to the solved form of the generated constraints to the type obtained by the constraint generation function, we get a well-typed program.

\section{Related Work} \label{relatedwork}

Types have been used before in Prolog systems: relevant works on type systems and type inference in logic programming include types used in the logic programming systems CIAO Prolog \cite{10.1007/BFb0026840,Vaucheret:2002:MPY:647171.718317}, SWI and Yap \cite{DBLP:conf/iclp/SchrijversCWD08}. CIAO uses types as approximations of the success set, while we use types as filters to the program semantics. There is an option where the programmer gives the types for the programs in the form of assertions, which is recommended in \cite{PCPH08}. The well-typings given in \cite{DBLP:conf/lopstr/SchrijversBG08}, also have the property that they never fail, in the sense that every program has a typing, which is not the case in our algorithm, which will fail for some predicates. The previous system of Yap only type checked predicate clauses with respect to programmer-supplied type signatures. 
Here we define a new type inference algorithm for pure Prolog, which is able to infer data types.

In several other previous works types approximated the success set of a predicate \cite{DBLP:conf/iclp/Zobel87,DBLP:books/mit/pfenning92/DartZ92,DBLP:books/mit/pfenning92/YardeniFS92,DBLP:conf/iclp/BruynoogheJ88}. This sometimes led to overly broad types, because the way logic programs are written can be very general and accept more than what was initially intended. These approaches were different from ours in the sense that in our work types can filter the success set of a predicate, whenever the programmer chooses to do so, using the closure operation.

A different approach relied on ideas coming from functional programming languages \cite{DBLP:journals/ai/MycroftO84,DBLP:conf/slp/LakshmanR91,DBLP:books/daglib/0095081,DBLP:conf/iclp/SchrijversCWD08}. Other examples of the influence of functional languages on types for logic programming are the type systems used in several functional logic programming languages \cite{DBLP:conf/birthday/Hanus13,DBLP:journals/jlp/SomogyiHC96}.
Along this line of research, a rather influential type system for logic programs was Mycroft and O'Keefe type system \cite{DBLP:journals/ai/MycroftO84}, which was later reconstructed by Lakshman and Reddy \cite{DBLP:conf/slp/LakshmanR91}. This system had types declared for the constants, function symbols and predicate symbols used in a program. Key differences from our work are: 1) in the Mycroft-O'Keefe type system, each clause of a predicate must have the same type. We lift this limitation extending the type language with sums of types, where the type of a predicate is the sum of the types of its clauses; 2) although we may use type declarations, they are optional and we can use a closure operation to infer datatype declarations from untyped programs.

Set constraints have also been used by many authors to infer types for logic programming languages \cite{DBLP:books/mit/pfenning92/HeintzeJ92,DBLP:conf/iclp/GallagherW94,Devienne97,Charatonik98,Drabent2000,Drabent02}. Although these approaches differ from ours since they follow the line of conservative approximations to the success set, we were inspired from several techniques from this area to define our type constraint solvers.

\section{Conclusions and Future Work}

In this paper, we present a flexible type inference algorithm for Prolog. Inferred types are standard semantic approximations by default, but the user may tune the algorithm, quite easily, to automatically infer types which correspond to the usual algebraic data types used in the program. Moreover, the algorithm may also be tuned to use predefined (optional) data type declarations to improve the output types. 
Although our  algorithm was tested and it was able to locate type errors in several existing Prolog programs, 
the next obvious step of this work is to prove its soundness with respect to a type system which defines the notion of {\em well-typed} program.
\bibliographystyle{alpha}
\bibliography{bibliography}
\newpage
\appendix

\section{Decidability}

\setcounter{theorem}{0}
\setcounter{lemma}{0}

\begin{lemma}[Equality Constraints]\label{lem1}
$solve\_eq$ always terminates and when it does, either it fails or the output is in normal form.
\end{lemma}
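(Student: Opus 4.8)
The plan is to treat $solve\_eq$ as a variant of the Martelli--Montanari unification procedure over type terms: I would prove termination by exhibiting a well-founded measure on configurations that strictly decreases at every rewrite step, and then characterise the terminal non-\emph{fail} configurations as exactly the solved forms (which I take to coincide with the ``normal form'' of the statement). Note that the configuration \emph{fail} is terminal, since every rule acts on a set of equations; so it suffices to bound reductions starting from a set $Eq$.

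First I would fix terminology. Call a type variable $\alpha$ \emph{solved} in a configuration $Eq$ if $Eq$ contains exactly one equation in which $\alpha$ occurs, that equation has the form $\alpha = \tau$, and $\alpha \notin vars(\tau)$; otherwise $\alpha$ is \emph{unsolved}. To each non-\emph{fail} configuration I associate a triple of natural numbers
\[ \mu(Eq) = (n_1, n_2, n_3), \]
where $n_1$ is the number of unsolved variables occurring in $Eq$, $n_2$ is the total number of symbol occurrences in $Eq$ (variables, constants, base types, function and type symbols all counted), and $n_3$ is the number of equations of the form $t = \alpha$ with $\alpha$ a type variable and $t$ neither a variable nor a type symbol, i.e.\ those eligible for reorientation. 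I order triples lexicographically; since the lexicographic order on triples of natural numbers is well founded, termination reduces to checking that $\mu$ strictly decreases under each of rules 1--4 (rule 5 sends the configuration to the terminal \emph{fail}).

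Next comes the case analysis. For rule 1 the deleted equation $t=t$ removes at least two symbol occurrences, so $n_2$ drops, while removing an equation cannot un-solve a variable, so $n_1$ is non-increasing. For rule 4, decomposition deletes the two outer occurrences of $f$ and preserves all subterms and variable occurrences, so again $n_2$ drops and $n_1$ is non-increasing. For rule 3, reorientation changes neither the variable occurrences nor $n_2$, removes one equation from the count $n_3$ without creating new ones, and can only move $\alpha$ towards solved, so $n_1$ is non-increasing and $n_3$ strictly drops. The crux is rule 2: substituting $t$ for $\alpha$ throughout $Rest$ may \emph{enlarge} $n_2$, which is precisely why $n_1$ must be the leading component. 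Here I would invoke the remark following the rules --- that the generated constraints never let $\alpha$ occur in $t$ --- so $\alpha \notin vars(t)$, whence after the step $\alpha$ occurs only in $\{\alpha=t\}$ and is solved; and no previously solved variable $\beta$ can be un-solved, because a solved $\beta$ occurs in a single equation, forcing $\beta \notin vars(t)$, so the substitution leaves it untouched. Thus $n_1$ strictly decreases and $\mu$ decreases irrespective of the growth of $n_2$. Reconciling this term-size blow-up with the need for a single globally decreasing quantity is the step I expect to be the main obstacle, and it is resolved exactly by the occur-check-free property guaranteeing that rule 2 always solves a fresh variable.

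Finally I would settle the solved-form claim by characterising the irreducible configurations, reading rule 5 as the failure case for configurations that are neither reducible by rules 1--4 nor already solved (so that solved forms are accepted as output, consistent with the statement). A non-\emph{fail} configuration to which rules 1--4 do not apply contains no equation $t=t$ (rule 1); no equation with a non-variable left-hand side, since such an equation would be decomposed (rule 4), reoriented (rule 3), or otherwise rejected by rule 5; and no left-hand-side variable occurring elsewhere in the set (rule 2). Hence every equation has the form $\alpha = \tau$ with the variables $\alpha$ pairwise distinct and absent from all right-hand sides, which is exactly solved form; conversely such a set enables none of rules 1--4 and is genuinely terminal. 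Together with termination, this shows that $solve\_eq$ always halts, either in \emph{fail} or in a configuration in solved form.
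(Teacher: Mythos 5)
Your proof is correct and follows the same overall strategy as the paper's --- a lexicographically ordered tuple of natural numbers that strictly decreases at each rewrite step --- but your choice of measure is genuinely different and, on the critical case, more careful. The paper uses the triple (NVRS, NSE, NENNF), where NVRS counts right-hand-side variables that occur in another equality, NSE counts symbols, and NENNF counts equalities not in normal form; its table claims that rule 2 leaves NVRS non-increasing and strictly decreases NSE. That entry is suspect: substituting $t$ for $\alpha$ throughout $Rest$ replaces single occurrences of $\alpha$ by copies of $t$ and can therefore \emph{increase} the symbol count whenever $t$ is not a single symbol, so the strict decrease must come from the leading component, not the second. Your measure puts the number of unsolved variables first and argues, via the occur-check-free property, that rule 2 strictly decreases it regardless of term-size blow-up; this is the standard Martelli--Montanari argument and it repairs exactly the weak point in the paper's table. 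You also explicitly prove the second half of the statement --- that every irreducible non-\emph{fail} configuration is in solved form --- by inspecting which equations can survive rules 1--4, whereas the paper leaves this implicit in the NENNF column. One caveat applies to both versions: the argument leans on the paper's unproved remark that $\alpha$ never occurs in $t$ for generated constraints; without that assumption rule 2 need not solve $\alpha$ and could loop, so it is doing real work and deserves to be stated as a hypothesis of the lemma.
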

\begin{proof}
We will define the following metrics for $solve\_eq$:
\begin{itemize}
    \item NVRS: number of variables on the right-hand side of equalities, that occur somewhere in another equality.
    \item NSE: number of symbols in the equalities.
    \item NENNF: number of equalities not in normal form.
\end{itemize}

We will prove termination by showing that NENNF reduces to zero. Termination of $solve\_eq$ is proven by a measure function that maps the constraint set to a tuple (NVRS,NSE,NENNF). The following table shows that each step decreases the tuple w.r.t. the lexicographical order of the tuple.
\\
\\
\begin{tabular}{c c c c}
   & NVRS & NSE & NENNF \\
1. &  $\geq$ & $>$ & \\
2. & $\geq$ & $>$ & \\
3. & $=$ & $=$ & $>$ \\
4. & $=$ & $>$ & \\
5. &  0  &  0  & 0 \\
\end{tabular}
\\
\end{proof}
\begin{lemma}[Subtyping]\label{lem2}
$solve\_ineq$ always terminates.
\end{lemma}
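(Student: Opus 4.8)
The plan is to follow the template of Lemma~\ref{lem1}: define a measure from configurations into a well-founded order --- a lexicographically ordered tuple of natural numbers --- and verify that each rewrite rule $1$--$11$ strictly decreases it (the terminal rule $12$ and the failure case need no argument). The decisive difference from $solve\_eq$ is the unfolding of type symbols in rules $7$ and $11$, which replace $\sigma$ by its right-hand side $Rhs_\sigma$: since type definitions may be recursive, this can enlarge a constraint and, under any purely syntactic size metric, could be applied forever. The feature that restores termination is the store of already-compared pairs, so the leading component of the measure must be built from it.

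First I would fix the universe of comparisons. Because $solve\_ineq$ never creates new type symbols --- it only unfolds, decomposes, selects summands, and substitutes into $Def$ --- the set $\Sigma$ of type symbols occurring in a run is determined by the initial $Def$. I would then show that the set $\mathcal{U}$ of pairs $(\sigma,t)$ that can ever be handed to rules $6$--$7$ or $10$--$11$ is finite: every type term arising on either side of a constraint is obtained from the finitely many initial terms and the fixed right-hand sides $Rhs_\sigma$ by subterm decomposition, summand selection, and variable elimination, and the restriction to deterministic, tuple-distributive definitions bounds the shapes this can produce. Granting finiteness of $\mathcal{U}$, the top component is $|\mathcal{U}| - |\mathit{store}|$. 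Rules $7$ and $11$ fire only when the current pair is \emph{not} yet recorded (they are guarded by the higher-priority rules $6$ and $10$) and they add that pair, so this component strictly decreases on exactly the dangerous rules; no rule ever removes an entry, so it never increases.

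On every other rule the store is untouched and the measure must drop to a secondary tuple on $(\mathit{Ineq},\mathit{Def})$. Natural candidates, in order, are: the number of distinct type variables still occurring in subtyping constraints (driven down by the eliminating rules $4$ and $8$, which move a variable's bound into $\mathit{Def}$); the number of constraints sharing a common variable as subject (collapsed by the intersection rule $3$); a measure of the top-level sum structure of the left-hand sides (removed by the splitting rule $5$); and finally the total symbol size of $\mathit{Ineq}$ (reduced by the decomposition rule $2$, the summand-selection rule $9$, and the discharge rules $1$, $6$, $10$). For each rule one checks that the first component it fails to leave constant strictly decreases.

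The hard part will be two points. The first is the finiteness of $\mathcal{U}$: I must rule out that the substitutions into $Def$ performed by rules $3$, $4$ and $8$ manufacture infinitely many distinct comparison terms, and this is exactly where determinism, tuple-distributivity, and the fact that each variable is eliminated at most once have to be used with care; I would also need termination of $intersect$ (Zobel's algorithm~\cite{DBLP:conf/iclp/Zobel87}) in rule $3$ together with a bound on the term it returns. The second is that rule $5$ duplicates its right-hand side, so it \emph{increases} the total symbol size; this is precisely why a sum-structure component must sit strictly above the size component in the tuple, and one must confirm, using the fixed priority between rules $5$ and $9$, that no rule ever raises that sum-structure component while leaving the variable counts fixed. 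Once $\mathcal{U}$ is known finite and these two points are settled, the remaining per-rule verification is the same routine bookkeeping as the table for Lemma~\ref{lem1}.
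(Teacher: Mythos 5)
Your proposal takes essentially the same route as the paper: a lexicographic measure whose leading component is the number of not-yet-made comparisons between types (the paper's NPC, decreased exactly by the unfolding rules 7 and 11 via the store), followed by counts of variables and symbols in the inequalities. You are in fact more careful than the paper's four-line table on the two points it glosses over --- the finiteness of the set of possible $(\sigma,t)$ comparisons, and the fact that rule 5 duplicates its right-hand side and so does not obviously decrease the raw symbol count NSI as the table claims --- so your added sum-structure component and the finiteness argument for $\mathcal{U}$ are refinements of, not departures from, the paper's argument.
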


\begin{proof}
We will define the following metrics for $solve\_ineq$:
\begin{itemize}
    \item NPC: number of possible comparisons between types that have not been made yet.
    \item NVRSI: number of variables on the right-hand side of inequalities.
    \item NSI: number of symbols in the inequalities.
    \item NI: number of inequalities.
\end{itemize}

We will prove termination by showing that NI reduces to zero. Termination of $solve\_ineq$ is proven by a measure function that maps the constraint set to a tuple (NPC,NVRSI,NSI,NI). The following table shows that each step decreases the tuple w.r.t. the lexicographical order of the tuple.
\\
\\
\begin{tabular}{c c c c c}
   & NPC  & NVRSI  & NSI  & NI \\
1. & $=$  & $\geq$ & $>$  &    \\
2. & $=$  & $=$    & $>$  &    \\
3. & $=$  & $\geq$ &$\geq$& $>$\\
4. & $=$  & $\geq$ & $>$  &    \\
5. & $=$  & $=$    & $>$  &    \\
6. & $=$  & $\geq$ & $>$  &    \\
7. & $>$  &        &      &    \\
8. & $=$  & $\geq$ & $>$  &    \\
9. & $=$  & $\geq$ & $>$  &    \\
10.& $=$  & $=$    & $>$  &    \\
11.& $>$  &        &      &    \\
12.& 0    &  0     & 0    & 0  \\
12.& $=$  &  0     & 0    & 0  \\
\end{tabular}
\end{proof}

\begin{theorem}[Termination]
$solve\_eq$ and $solve\_ineq$ always terminate. When $solve\_eq$ terminates, it either fails or the output is in solved form. When $solve\_ineq$ terminates, it either fails or the output is a pair with zero subtyping constraints.
\end{theorem}

\begin{proof}
By lemma \ref{lem1} and lemma \ref{lem2}, we prove this theorem.
\end{proof}

\end{document}